\documentclass{llncs}
\usepackage{makeidx}  
%

\usepackage{bbm,amsmath,amsfonts}
\usepackage[dvips]{graphicx}
\newcommand\ind[1]{\mathbbm{1}_{\{#1\}}}
\def\N{{\mathbb N}}
\def\R{{\mathbb R}}
\def\E{{\mathbb E}}
\def\etal{{\em et al.}}
\def\cal{\mathcal}
\graphicspath{{Fig/}}

\begin{document}
\frontmatter          
\pagestyle{headings}  
\mainmatter              
\title{Stability Properties of Networks with Interacting TCP Flows}
\titlerunning{Interacting TCP Flows}  
%

\author{Carl Graham\inst{1} \and Philippe Robert\inst{2} \and Maaike
  Verloop\inst{3}\thanks{Part of this work was done during a 3-month visit of Maaike
    Verloop at INRIA Paris --- Rocquencourt with financial support of the European Network
    of Excellence EURO-NF.}} 
\authorrunning{Graham, Robert and Verloop}   
%
\tocauthor{Carl Graham, Philippe Robert and Maaike Verloop}
\institute{UMR 7641  CNRS --- \'Ecole Polytechnique, Route de Saclay, 91128 Palaiseau, France\\
\email{carl@cmapx.polytechnique.fr}
\and
INRIA Paris --- Rocquencourt, Domaine de Voluceau, 78153 Le Chesnay, France\\
\email{Philippe.Robert@inria.fr}\\
\texttt{http://www-rocq.inria.fr/\~{}robert}
\and
CWI, P.O. Box 94079, 1090 GB Amsterdam, The Netherlands\\
\email{I.M.Verloop@cwi.nl}\\
\texttt{http://www.cwi.nl/\~{}maaike}
}

\maketitle              

\begin{abstract}
The equilibrium distributions  of a Markovian model describing  the interaction of several
classes of  permanent connections  in a network  are analyzed.  It has been  introduced by
Graham  and  Robert~\cite{Graham:01}.   For this  model  each  of  the connections  has  a
self-adaptive behavior in that its transmission  rate along its route depends on the level
of  congestion  of   the  nodes  on its   route.   It  has  been  shown
in~\cite{Graham:01} that  the invariant distributions are determined  by the solutions 
of a fixed point equation in a finite dimensional space. In this paper, several examples
of these fixed point equations are studied. The topologies investigated are rings, trees and a linear network, with
various sets of routes through the nodes. 
\end{abstract}

\section{Introduction}
Data transmission in  the Internet network can be described as a  self-adaptive system to the
different congestion  events that regularly occur  at its numerous nodes.  A connection, a
TCP flow, in this network adapts  its throughput according to the congestion it encounters
on its path: Packets are sent as long as no loss is detected and throughput grows linearly
during that time.  On  the contrary when a loss occurs, the  throughput is sharply reduced
by  a multiplicative  factor.   This scheme  is known  as an  Additive  Increase and
Multiplicative Decrease algorithm (AIMD).

Globally,  the TCP  protocol  can  be seen  as  a bandwidth  allocation  algorithm on  the
Internet.  From a  mathematical modelling  perspective, the  description is  somewhat more
difficult. While the representation  of the evolution of the throughput of  a single TCP flow
has  been the  object  of  various rigorous  works,  there are  few  rigorous studies  for
modelling the evolution of a large set of TCP connections in a quite large network.

A possible  mathematical formulation which has  been used is via  an optimization problem:
given  $K$   classes  of   connections,  when  there   are  $x_k$  connections   of  class
$k\in\{1,\ldots,K\}$, their total throughput achieved  is given by $\lambda_k$ so that the
vector $(\lambda_k)$ is a solution of the following optimization problem
\[
\max_{\lambda\in\Lambda} \sum_{k=1}^K x_kU_k(\lambda_k/x_k),
\]
where $\Lambda$ is the set of admissible throughputs which takes into account the capacity
constraints of the network. The functions  $(U_k)$ are defined as {\em utility} functions, and 
various expressions have been proposed for them. See Kelly \etal~\cite{Kelly:02},
Massouli\'e~\cite{Massoulie:01} and Massouli\'e   and    Roberts~\cite{Massoulie}. With this
representation, the TCP protocol is seen as an adaptive algorithm maximizing some
criterion at the level of the network. 

A different point of view has been proposed in Graham and Robert~\cite{Graham:01}. It
starts on the local dynamics of the AIMD algorithm used by TCP and, through a scaling
procedure, the global behavior of the network can then be described rigorously. It is assumed
that there are $K$ classes of {\em permanent } connections going through different nodes
and with different characteristics. The loss rate of a connection using a given node $j$,
$1\leq j\leq J$, is described as a function of the  congestion $u_j$ at this node. The
quantity $u_j$ is defined as the (possibly weighted) sum of the throughputs of {\em all}
the connections that use node $j$.  The interaction of the connections in the network is
therefore expressed via the loss rate at each node. 

It has been  shown in Graham and Robert~\cite{Graham:01} that  under a mean-field scaling,
the evolution of a class $k$  connection, $1\leq k\leq K$, can be asymptotically described
as the unique solution of an unusual stochastic differential equation. Furthermore, it has
also been  proved that the  equilibrium distribution of  the throughputs of  the different
classes of  connections is in  a one to  one correspondence with  the solution of  a fixed
point  equation $({\cal E})$ of  dimension  $J$  (the  number  of  nodes).  

Under ``reasonable'' conditions, there should  be only one solution of $({\cal E})$ and
consequently  a unique stable equilibrium of the network.  Otherwise this would 
imply that the state of the network could oscillate between several stable states. Although this is  mentioned here and there in the literature, this has not been firmly established in the context  of an  IP network.  It has  been  shown that  multi-stability may  occur in  loss
networks, see Gibbens \etal~\cite{Gibbens:02} and Marbukh~\cite{Marbukh:01} or in the context
of a  wireless network with admission  control, see Antunes  \etal~\cite{Antunes:06}. Raghunathan
and Kumar~\cite{Rag} presents  experiments that suggest that a  phenomenon of bi-stability
may occur in a
context similar to the one considered  in this paper but for wireless networks.

It turns out that it is  not easy to check in practice whether the fixed point
equation $({\cal E})$ has  a unique solution  or not. The  purpose of this  paper is to  investigate in
detail this question for several topologies. The paper is organized as
follows. Section~\ref{secOne} reviews the main definitions and results used in the paper. 
In addition, a simple criterion for the existence of a fixed-point solution is given. 
Section~\ref{secTree} presents a uniqueness result for a tree topology under the assumptions that all 
connections use the root. Section~\ref{secLinear} considers a linear network. Section~\ref{Torus} studies several scenarios for ring topologies and a uniqueness result is proved for  connections going through one, two, or 
all the nodes. 
Two main approaches are used to prove uniqueness: monotonicity properties
of the network and contraction arguments.  

A general conjecture that we make is that when the loss rates are increasing in the level of congestion, this should be
sufficient to imply the uniqueness of the equilibrium in a general network (together with regularity properties perhaps). 

\section{A Stochastic Fluid Picture}\label{secOne}
\setcounter{equation}{0}

In this section, a somewhat simplified version of the stochastic model of interacting TCP flows
of Graham and Robert~\cite{Graham:01} is presented. 

\subsection*{The case of a single connection}

Ott  \etal~\cite{Ott:01} presents a fluid model of a single connection.  Via  scalings  with
respect to  the  loss rate,  Dumas \etal~\cite{Dumas:06}  proves various limit  theorems
for the resulting processes. The limiting picture of Dumas \etal~\cite{Dumas:06} for the
evolution of the throughput of single long connection is as follows. 

If the instantaneous throughput at time $t$ of the connection is 
$W(t)$, this process has the Markov property and its infinitesimal generator is given by   
\begin{equation}\label{gen}
\Omega(f)(x)=a f'(x)+ \beta x(f(rx)-f(x))
\end{equation}
for $f$ a  $C^1$-function from $\R_+$ to  $\R$. For $t\geq 0$, the quantity $W(t)$ should
be thought as the instantaneous throughput of the connection at time $t$.  

The Markov process $(W(t))$ increases linearly at rate $a$. 
The  constant $a$ is related to the distance between the source and the
destination. It increases proportionally to the round trip time $RTT$, typically
\[
a=\frac{C_0}{C_1+RTT},
\]
for some constants $C_0$ and $C_1$.

Given $W(t)=x$, the process $(W(t))$ jumps from $x$ to $rx$ ($r$ is usually $1/2$) at rate $\beta x$. The expression $\beta x$ represents the loss  rate of the connection. 
Of course, the quantities $a$,  $\beta$ and
$r$  depend on the parameters of the connection. 

The density of the invariant distribution of this Markov process is given in the following
proposition.  It  has been analyzed in Ott  \etal~\cite{Ott:01} at the fluid  level and by
Dumas  \etal~\cite{Dumas:06},   see  also  Guillemin \etal~\cite{Guillemin:05}.   The transient
behavior has been investigated in Chafai \etal~\cite{Chafai}.
\begin{proposition}\label{OneProp}
The function
\begin{equation}\label{densityH}
H_{r,\rho}(w)=\frac{\sqrt{2\rho/\pi}}{\prod_{n=0}^{+\infty}(1-r^{2n+1})}
\sum_{n=0}^{+\infty} \frac{r^{-2n}}{\prod_{k=1}^{n}(1-r^{-2k})}
e^{-\rho r^{-2n}{w^2}/{2}}, \quad w\geq 0,
\end{equation}
with $\rho=a/\beta$, is the density of the invariant distribution of the Markov process
$(W(t))$ whose infinitesimal generator is given by Equation~\eqref{gen}. Furthermore, its
expected value is given by 
\begin{equation}\label{Einv}
\int_0^{+\infty}wH_{r,\rho}(w)\,dw= \sqrt{\frac{2\rho}{\pi}} \prod_{n=1}^{+\infty} \frac{1-r^{2n}}{1-r^{2n-1}}.
\end{equation}
\end{proposition}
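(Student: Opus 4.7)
The approach is direct verification: show that $H_{r,\rho}$ is annihilated by the adjoint of $\Omega$, that it integrates to one, and compute its first moment separately. This essentially retraces the argument of Dumas \etal~\cite{Dumas:06}.

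Starting from the stationarity identity $\int_0^{+\infty}\Omega(f)(x)H(x)\,dx=0$ for test functions $f\in C^{1}_c((0,+\infty))$, an integration by parts on the drift term and the change of variable $y=rx$ on the jump term (the Jacobian producing the factor $1/r^{2}$) yield the functional ODE
\[
aH'(x)=\frac{\beta x}{r^{2}}H\!\left(\frac{x}{r}\right)-\beta xH(x),\qquad x>0.
\]
The $r$-self-similar structure of this equation, with the dilation $H(\cdot/r)$, suggests a Gaussian ansatz $H(w)=\sum_{n\ge 0}c_n\phi_n(w)$ chosen so that the dilation acts by a shift of index, $\phi_n(w/r)=\phi_{n+1}(w)$; this is exactly what one gets with $\phi_n(w)=e^{-\rho r^{-2n}w^{2}/2}$. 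Substituting, using the linear independence of distinct Gaussians, and matching the coefficient of each $\phi_n$ reduces the ODE to a first-order linear recursion for the $c_n$. Solving with $c_0=1$ and iterating produces exactly the ratio $r^{-2n}/\prod_{k=1}^{n}(1-r^{-2k})$ appearing in~\eqref{densityH}.

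Two $q$-series identities (with $q=r^{2}$) then close the proof. Term-by-term integration of each Gaussian against $dw$ turns the normalization $\int_0^{+\infty}H_{r,\rho}(w)\,dw=1$ into an identity of the form $\sum_n c_n r^n=\prod_{n\ge 0}(1-r^{2n+1})$, a classical Euler identity that both fixes the prefactor $\sqrt{2\rho/\pi}$ and explains the infinite product in the denominator of~\eqref{densityH}. The expectation~\eqref{Einv} is obtained by the same term-by-term integration applied to $wH_{r,\rho}(w)$, with a second Euler-type identity rearranging the resulting series into the stated quotient of infinite products. Uniqueness of the invariant distribution is standard via Foster--Lyapunov applied to $V(x)=x^{2}$, for which $\Omega V(x)=2ax-\beta(1-r^{2})x^{3}$ is negative outside a compact set. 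The technical core is thus the pair of $q$-series identities; the remainder (the adjoint derivation, the coefficient matching, and the ergodicity argument) is essentially bookkeeping.
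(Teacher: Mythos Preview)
The paper does not give its own proof of this proposition: it simply states the result and defers to Ott \etal~\cite{Ott:01}, Dumas \etal~\cite{Dumas:06}, and Guillemin \etal~\cite{Guillemin:05} for the analysis. Your proposal, which you yourself describe as retracing the argument of Dumas \etal, is therefore not in conflict with anything in the paper --- you are supplying the details the authors chose to omit. The outline (adjoint equation for $H$, Gaussian ansatz adapted to the dilation $w\mapsto w/r$, coefficient recursion, Euler $q$-series identities for the normalization and first moment, Foster--Lyapunov for uniqueness) is the standard route in those references and is correct.
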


\subsection*{A Representation of Interacting Connections in a Network}
The    network    has   $J\ge1$    nodes    and    accommodates    $K\ge1$   classes    of
permanent connections. For $1\leq k\leq K$,  the number of class $k$ connections is
$N_k\geq 1$, and one sets
\[
N=(N_1,\ldots,N_K), \text{ and }\quad |N| = N_1 + \cdots + N_K.
\]
An \emph{allocation  matrix} $A = (A_{jk}, 1  \le j \le J,  1 \le k \le  K)$ with positive
coefficients describes the use of nodes by the connections. In particular the route of a
class~$k$ connection goes through node $j$ only if $A_{jk}>0$. In practice, the class of a
connection is determined by the sequence set of nodes it is using.

If $w_{n,k}\geq 0 $ is the  throughput of the  $n$th class  $k$ connection,  $1\leq n\leq
N_k$, the  quantity $A_{jk} w_{n,k}$ is the weighted throughput at node $j$ of this
connection. A  simple example would be to take $A_{jk}=1$ or $0$  depending on whether 
 a  class~$k$ connection uses  node $j$  or not.   The total weighted throughput $u_j$
of node $j$ by the various connections is given by 
\[
u_j = \sum_{k=1}^K\sum_{n=1}^{N_k}  A_{jk} w_{n,k}.
\]
The quantity $u_j$ represents the level of utilization/congestion of node $j$. In particular, the loss
rate  of a  connection going  through node~$j$  will depend  on this  variable.  

For $1\leq k\leq K$, the corresponding  parameters $a$ and $\beta$ of Equation~\eqref{gen} for
a class $k$ connection are given by  a non-negative number $a_k$
and a function $\beta_k :  \R_+^J \rightarrow \R_+$, so that when  the resource vector of the
network is  $u = (u_j, {1  \le j \le J})$  and if the state  of a class  $k$ connection is
$w_k$:
\begin{itemize}
\item  Its state increases linearly at rate $a_k$. For example $a_k=1/R_k$ where
$R_k$ is the round trip time between the source and the destination of a class $k$
connection. 
\item  A loss for this connection occurs at rate  $w_k\beta_k(u)$ and in this case its
  state jumps from $w_k$ to $r_kw_k$.  The function $\beta_k$ depends only on the utilization of all
  nodes used by class $k$  connections. In particular, if a class $k$ connection goes
  through the nodes ${j_1}$, ${j_2}$, \ldots, ${j_{l_k}}$, one has 
\[
\beta_k(u)=\beta_k(u_{j_1},u_{j_2},\ldots,u_{j_{l_k}}).
\]
A more specific  (and natural) choice for $\beta_k$ would be 
\begin{equation}
\label{eq:delta}
\beta_k(u)=\delta_k + \varphi_{j_1}(u_{j_1})+\varphi_{j_2}(u_{j_2})+\cdots+\varphi_{j_{J_k}}(u_{j_{l_k}}),
\end{equation}
where $\varphi_{j_\ell}(x)$ is the loss rate at node $j_\ell$ when its congestion level is
$x\geq 0$, and $\delta_k$ is the loss rate in a non-congested network.
Another example is when the loss rate $\beta_k$ depends only on the sum of the utilizations of the nodes used by class~$k$, i.e., 
\begin{equation}
 \label{eq:lossr}
\beta_k(u)= \beta_k\left(\sum_{l=1}^{l_k} u_{j_{l}}\right).
\end{equation}
\end{itemize}

\subsection*{Asymptotic behaviour of typical connections}
If $(W_{n,k}(t))$ denotes the  throughput of the  $n$th class  $k$
connection,  $1\leq n\leq N_k$, then the vector 
\[
(W(t))=([(W_{n,k}(t)), 1\leq k\leq K, 1\leq n\leq N_k], t\geq 0)
\]
has the Markov property. As it stands, this Markov process is quite difficult to
analyze. For this reason, a mean field scaling is used to get a more quantitative
representation of the interaction of the flows. 
More specifically, it is assumed  that the total number of connections $\|N\|$ 
goes to infinity and that the total number of class $k$ connections is of the order 
$p_k\|N\|$, where $p_1+\cdots+p_K=1$. 

For each $1\leq k\leq K$, one takes a class $k$ connection  at random, let  $n_k$ be its
index,  $1\leq n_k\leq N_k$. The process $(W_{n_k,k}(t))$ represents the throughput of a ``typical''
class $k$  connection. It is shown in Graham and Robert~\cite{Graham:01} that, as $\|N\|$
goes to infinity and under mild assumptions, the process  $[(W_{n_k,k}(t)),
  1\leq k\leq K]$  converges in distribution to 
$(\overline{W}(t))=[(\overline{W}_{k}(t)), 1\leq k\leq K]$,
where the processes $(\overline{W}_{k}(t))$, for $1\leq k \leq K$, are
independent and, for $1\leq k\leq K$, the process $(\overline{W}_{k}(t))$ is the solution of
the following stochastic differential equation,
\begin{equation}\label{nlsde}
d\overline{W}_k(t) = a_k\, dt 
- ( 1-r_k) \overline{W}_k(t-) 
\int \ind{0 \le z \le  \overline{W}_k(t-)\beta_k\left(u_{\overline{W}}(t)\right)}
\,{\cal N}_k(dz, dt),
\end{equation}
with ${u}_{\overline{W}}(t) = (u_{{\overline{W},j}}(t), 1 \le j \le J)$ and, for $1 \le j \le J$,
\[
u_{\overline{W},j}(t) = \sum_{k=1}^K A_{jk} p_k \E(\overline{W}_k(t)),
\]
where $({\cal  N}_k, 1\leq k\leq K)$ are  i.i.d.\ Poisson point processes  on $\R_+^2$ with
Lebesgue characteristic measure. 

Because of the role of the  deterministic function $(u_{\overline{W}(t)})$ in these
equations, the Markov  property  holds for this process but it is not  {\em
  time-homogeneous}. The analogue of the infinitesimal generator 
$\overline{\Omega}_{k,t}$ is given by 
\[
\overline{\Omega}_{k,t}(f)(x)=a_kf'(x)+ x\beta_k ({u}_{\overline{W}}(t))(f(r_kx)-f(r_k)).
\]
The homogeneity holds when the function  $({u}_{\overline{W}}(t))$ is equal to a constant 
$u^*$, which will be the case at equilibrium. In this case a class $k$ connection behaves
like a single isolated connection with parameters $a=a_k$ and $\beta=\beta_k(u^*)$.

\subsection*{The Fixed Point Equations}
The following theorem gives a characterization of the invariant distributions for
the process $(\overline{W}(t))$. 
\begin{theorem}\label{thinv}
The invariant distributions for solutions $(\overline{W}(t))$ of Equation~\eqref{nlsde} 
are in one-to-one correspondence with the solutions $u\in\R_+^J$ of the fixed point equation   
\begin{equation}\label{fp}
u_j=\sum_{k=1}^K A_{jk} \phi_k(u), \quad 1\leq j\leq J,
\end{equation}
where  
\begin{equation}\label{Irancy}
\phi_k(u)=p_k\sqrt{\frac{2}{\pi}}\left(\prod_{n=1}^{+\infty} \frac{1-r_k^{2n}}{1-r_k^{2n-1}}\right)\;\sqrt{\frac{a_k}{\beta_k(u)}}.
\end{equation}
If $u^*$ is such a solution, the corresponding invariant distribution has the density
$w\to \prod_{k=1}^K H_{r_k,\rho_k}(w_k)$ on $\R_+^K$, where $\rho_k=a_k/\beta_k(u^*)$ and $H_{r,\rho}$ is
defined in Proposition~\ref{OneProp}.
\end{theorem}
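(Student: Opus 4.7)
The central observation is that the nonlinearity of \eqref{nlsde} enters only through the \emph{deterministic} function $u_{\overline{W}}(t)$. At equilibrium this function must be constant in $t$, and \eqref{nlsde} then decouples into $K$ independent, time-homogeneous one-dimensional SDEs of the type analyzed by Proposition~\ref{OneProp}.

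For the direct implication, let $\pi$ be an invariant distribution of $(\overline{W}(t))$ and start the process under $\pi$. Then $\E(\overline{W}_k(t))$ does not depend on $t$, so $u_{\overline{W}}(t)\equiv u^*$ for some fixed $u^*\in\R_+^J$ with coordinates $u_j^*=\sum_{k=1}^K A_{jk} p_k\E(\overline{W}_k)$. With this constant function, \eqref{nlsde} becomes, for each $k$, an autonomous SDE driven by ${\cal N}_k$ alone, whose generator coincides with \eqref{gen} for $a=a_k$, $\beta=\beta_k(u^*)$, $r=r_k$. Since the Poisson measures $({\cal N}_k)$ are mutually independent and no further coupling remains, the full process is simply a product of $K$ independent ergodic Markov processes; its unique invariant law is therefore the product of the marginals identified by Proposition~\ref{OneProp}, that is,
\[
\pi(dw)=\prod_{k=1}^K H_{r_k,\rho_k}(w_k)\,dw_k, \qquad \rho_k=a_k/\beta_k(u^*).
\]
Substituting \eqref{Einv} into the self-consistency equation for $u^*_j$ reproduces exactly the fixed point equation \eqref{fp} with $\phi_k$ as in \eqref{Irancy}.

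Conversely, given any $u^*\in\R_+^J$ solving \eqref{fp}, construct $K$ independent processes $(\overline{W}_k(t))$, the $k$-th being the stationary Markov process of generator \eqref{gen} with parameters $(a_k,\beta_k(u^*),r_k)$ started from $H_{r_k,\rho_k}$. By \eqref{Einv} combined with \eqref{fp}, the deterministic function $u_{\overline{W},j}(t)$ is then identically equal to $u^*_j$ for all $t$, so each $\overline{W}_k$ actually satisfies \eqref{nlsde} with $u_{\overline{W}}\equiv u^*$; the product law is thus invariant for \eqref{nlsde}. This yields an inverse to the map constructed above and establishes the bijection.

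The one subtle step is the appeal, in the direct implication, to uniqueness of the invariant measure for the decoupled linear system, which is what forces $\pi$ to be a product of the $H_{r_k,\rho_k}$ rather than some other law with the prescribed marginals and first moments. This ultimately relies on ergodicity of each one-dimensional Markov process with generator \eqref{gen}, already established in Dumas~\etal~\cite{Dumas:06}; together with independence of the driving Poisson measures, it forces the full invariant law to factor.
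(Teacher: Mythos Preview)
The paper does not actually supply a proof of Theorem~\ref{thinv}; the result is quoted from Graham and Robert~\cite{Graham:01} (see the introduction and the surrounding discussion in Section~\ref{secOne}), so there is no in-paper argument to compare against. Your sketch is correct and is in fact the natural mechanism behind the result: stationarity forces the mean-field term $u_{\overline{W}}(t)$ to be constant, the nonlinear SDE then decouples into $K$ independent time-homogeneous AIMD processes, and Proposition~\ref{OneProp} together with ergodicity of each component pins down the invariant law as the product of the $H_{r_k,\rho_k}$, with the self-consistency relation $u^*_j=\sum_k A_{jk}p_k\E(\overline{W}_k)$ reducing to~\eqref{fp} via~\eqref{Einv}. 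The only point worth making explicit is that ``invariant distribution for~\eqref{nlsde}'' has to be read in the McKean--Vlasov sense (the law of the process started from $\pi$ remains $\pi$), which is exactly what you use when asserting that $\E(\overline{W}_k(t))$ is constant; once that is granted, the rest is routine.
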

The above theorem shows that if  the fixed point equation~\eqref{fp} has several solutions, 
then the  limiting process $(\overline{W}(t))$ has  several invariant distributions. Similarly, if
equation~\eqref{fp} has  no solution  then, in particular, $(\overline{W}(t))$ cannot converge  to an
equilibrium. These  possibilities have been  suggested in the Internet  literature through
simulations, like the cyclic behavior of  some nodes in the case of congestion. 

Under  mild and  natural assumptions,  such  as the  loss rate  being non-\linebreak decreasing  with
respect to the utilization  in the nodes, we show that for some  specific topologies  there
exists a unique fixed point. We believe  that such a uniqueness result will hold, in fact,
for  any  network  in general  (under  suitable  regularity  properties on  the  functions
$\beta_k$,  $k=1,\ldots, K$).   Before proceeding  to the  examples, we  first  present an
existence result that holds for a general network.

\subsection{An Existence Result}
In this section the existence of a solution to the fixed-point equation~(\ref{fp})  is proved for a quite general framework.

If $u$ is a solution of Equation~\eqref{fp} and $z_k=\phi_k(u)$, $1\leq k\leq K$, then the vector $z=(z_k)$ satisfies the relation
$u=Az$, i.e., $u_j=A_{j1} z_1+A_{j2} z_2+\cdots+A_{jK} z_K$, $1\leq j\leq J$, and as well  
\begin{equation}\label{fp1}
z=\Phi(z)\stackrel{\text{def.}}{=}(\phi_k(A z), 1\leq k\leq K).
\end{equation}The proposition below gives a simple criterion for the existence of a fixed point. 
\begin{proposition}
\label{prop:fp_exist}
If the functions $u\to \beta_k(u)$, $1\leq k \leq K$, are  continuous and non-decreasing,  and if
there exists a vector $z^{(0)}\in \mathbb{R}^K_+$ such that the relations 
\[
z^{(0)}\leq \Phi(z^{(0)}), z^{(0)}\leq \Phi(\Phi(z^{(0)})), \text{ and } \Phi(z^{(0)})<\infty,
\]
hold coordinate by coordinate,  then 
there exists at least one solution for the fixed point Equation~\eqref{fp1} and therefore also for
Equation~\eqref{fp}.
\end{proposition}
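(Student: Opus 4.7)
The plan is to reduce the existence claim to Brouwer's fixed point theorem applied to $\Phi$ on a carefully chosen order interval in $\R_+^K$, exploiting the fact that $\Phi$ is not monotone but \emph{anti-monotone}. First I would observe that, by formula~\eqref{Irancy}, each $\phi_k$ is a continuous and non-increasing function of $u$, since $\beta_k$ is continuous and non-decreasing and $\phi_k$ depends on $u$ only through $1/\sqrt{\beta_k(u)}$. Because $A$ has non-negative coefficients, $z \mapsto Az$ is coordinatewise non-decreasing, so $\Phi = (\phi_k \circ A)_k$ is coordinatewise non-increasing in $z$. Consequently $\Phi \circ \Phi$ is non-decreasing, which is the algebraic content of the two hypotheses $z^{(0)} \leq \Phi(z^{(0)})$ and $z^{(0)} \leq \Phi(\Phi(z^{(0)}))$.

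Next I would introduce the compact convex set
\[
C = \{ z \in \R^K : z^{(0)} \leq z \leq \Phi(z^{(0)}) \},
\]
nonempty by the first hypothesis and bounded by the assumption $\Phi(z^{(0)})<\infty$. I claim $\Phi(C) \subseteq C$. For the upper bound, if $z \geq z^{(0)}$ then anti-monotonicity yields $\Phi(z) \leq \Phi(z^{(0)})$. For the lower bound, if $z \leq \Phi(z^{(0)})$ then anti-monotonicity gives $\Phi(z) \geq \Phi(\Phi(z^{(0)})) \geq z^{(0)}$, using the second hypothesis. The restriction of $\Phi$ to $C$ is continuous: for every $z \in C$, one has $Az \geq Az^{(0)}$, hence $\beta_k(Az) \geq \beta_k(Az^{(0)}) > 0$ (positivity follows from $\phi_k(Az^{(0)}) < \infty$), so $z \mapsto \sqrt{a_k/\beta_k(Az)}$ is continuous on $C$. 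Brouwer's fixed point theorem then produces a fixed point $z^* \in C$ of $\Phi$, and setting $u^* = A z^*$ yields a solution of the fixed point equation~\eqref{fp}.

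The main obstacle in this argument is not the application of Brouwer itself but the verification that $\Phi$ is well-defined and continuous on the whole order interval $C$; this is precisely why both technical hypotheses $z^{(0)} \leq \Phi(\Phi(z^{(0)}))$ and $\Phi(z^{(0)}) < \infty$ are imposed, the first to obtain $\Phi(C) \subseteq C$, the second to ensure boundedness of $C$ and, together with monotonicity of $\beta_k$, strict positivity of $\beta_k$ on $A\,C$. A natural alternative would be to iterate $\Phi$ from $z^{(0)}$ and extract limits of the even and odd subsequences, which are respectively non-decreasing and non-increasing under $\Phi \circ \Phi$; however this only yields a $2$-cycle of $\Phi$, so the Brouwer route is the cleaner way to actually produce a fixed point.
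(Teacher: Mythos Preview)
Your proof is correct and in fact more direct than the paper's. Both arguments ultimately invoke Brouwer's theorem on an order interval mapped into itself by the anti-monotone map $\Phi$, but the paper first iterates: it sets $z^{(n)}=\Phi(z^{(n-1)})$, observes that the even and odd subsequences are monotone and converge to limits $z_*\leq z^*$ forming a $2$-cycle $z^*=\Phi(z_*)$, $z_*=\Phi(z^*)$, and only then applies Brouwer on the tighter interval $D=\{z:z_*\leq z\leq z^*\}$. You skip the iteration and apply Brouwer directly on $C=\{z:z^{(0)}\leq z\leq \Phi(z^{(0)})\}$, which already satisfies $\Phi(C)\subseteq C$ by exactly the argument you give. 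Amusingly, the ``natural alternative'' you describe in your last paragraph---iterate, get a $2$-cycle, then still need Brouwer---is precisely the route the paper takes. Your version is shorter and loses nothing for the existence claim; the paper's preliminary iteration only buys a sharper a priori localization of the fixed point, which is not used afterwards.
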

\begin{proof}
Define   the  sequence   $z^{(n)}=\Phi(z^{(n-1)})$,  $n=1,2,\ldots$.    From  $z^{(0)}\leq
\Phi(z^{(0)})$  and  $z^{(0)}\leq   \Phi(\Phi(z^{(0)}))$,  it  follows  that  $z^{(0)}\leq
z^{(1)}$ and $z^{(0)}\leq z^{(2)}$. Since the function $\Phi$ is non-increasing, one gets
that  the relation
\[
z^{(0)}\leq z^{(2)}\leq \ldots \leq z^{(2n)} \leq \ldots \leq z^{(2n+1)} \leq \ldots \leq z^{(3)} \leq z^{(1)}
\]
holds.
Hence, there are $z_*, z^*\in \mathbb{R}_+^K$ such that 
\[
\lim_{n\to \infty} z^{(2n)} = z_* \text{ and } \lim_{n\to \infty}  z^{(2n-1)}=z^*,
\]
with $z_*\leq z^*$. Since $z^{(2n)}=\Phi(z^{(2n-1)})$ and $z^{(2n+1)}=\Phi(z^{(2n)})$, by
continuity we also have that $z^*=\Phi(z_*)$  and $z_*=\Phi(z^*)$. 

Define the set $D=\{z: z_*\leq z\leq z^*\}$. Note that $z^*\leq z^{(1)}=\Phi(z^{(0)})<\infty$, hence $D$ is bounded. In addition,  for $z\in D$,  
\[
z_*=\Phi(z^*)\leq \Phi(z) \leq \Phi(z_*)=z^*,
\] 
since the function $\Phi$ is non-increasing. One can therefore apply Brouwer fixed point theorem to $\Phi$
restricted to the compact convex set $D$, and conclude that $D$ contains at least one
fixed point of the function $\Phi$. The proposition is proved.
\end{proof}
The conditions of Proposition~\ref{prop:fp_exist} trivially hold when $\beta_k$ is non-decreasing and  $\beta_k(0)>0$
for all $k=1,\ldots, K$, since then $\Phi(0)<\infty$,  $0\leq \Phi({0})$ and ${0}\leq
\Phi(\Phi({0}))$. In particular, when the function $\beta_k$ is given by~(\ref{eq:delta}),  $\delta_k>0$ is a
sufficient condition for the existence of a fixed point.

\section{Tree topologies}\label{secTree}

We consider a finite tree network. A connection starts in the root
and then follows  the tree structure until it  leaves the network at some  node. The set
of routes is therefore indexed by the set of nodes, i.e.,  a connection following route~$G\in{\cal T}$ starts in the root  and leaves the tree in node $G$. 

The tree can be  classically represented  as  a subset ${\cal  T}$ of $\cup_{n\geq 0}\N^n$
with the constraint that if  $G=(g_1,\ldots,g_p)\in{\cal  T}$, then, for  $1\leq  \ell\leq
p$,  the element $H=(g_1,\ldots, g_\ell)$ is a node of  the tree as well. In addition,
node~$H$ is the $g_\ell$th child of generation (level) $\ell$ and the ancestor of $G$ for
this generation. One writes $H\subseteq G$ in this situation and $H\vdash G$ when
$\ell=p-1$, i.e., when $G$ is a daughter of $H$. The quantity $u_{[H,G]}$  denotes the
vector $(u_P,P: H\subseteq P\subseteq G)$. The root of the tree  is denoted by $\emptyset$. 
\begin{figure}[ht]
\begin{center}
\scalebox{.3}{\includegraphics{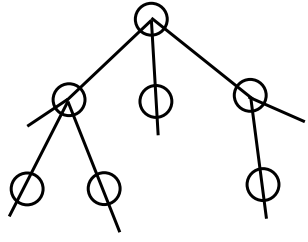}}
\end{center}
\caption{Tree with connections starting at root node}
\end{figure}
Assume that $A_{HG}=1$ if route~$G$ uses node~$H$, and 0 otherwise. 
Equation~\eqref{fp} writes in this case,   
\begin{equation*}
u_H=\sum_{G\in {\cal T},  H\subseteq G }  \phi_{G}(u_{[\emptyset,G]}),  \ \ \ H\in{\cal T},
\end{equation*}
which is equivalent to the recursive equations
\begin{equation}\label{eq:leave}
 u_H=\phi_H(u_{[\emptyset, H]}) + \sum_{G\in {\cal T},  H \vdash G} u_{G}, \ \ \ H\in{\cal T}.
\end{equation}

\begin{proposition}
If the functions $\beta_H, H\in {\cal  T}$, are continuous and non-decreasing, then there exists a unique solution for the fixed point equation~(\ref{fp}).
\end{proposition}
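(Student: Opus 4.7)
The plan is to exploit the recursive structure of equation~\eqref{eq:leave} by processing the tree from the leaves up to the root. For each node $H$, the idea is to build a function $\psi_H$ that, given the congestion values $u_P$ at the strict ancestors $P\subsetneq H$, returns the value $u_H$ of any consistent solution restricted to the subtree rooted at $H$. I would show inductively that $\psi_H$ is well defined (unique, and existing) and is non-increasing in each of its arguments. Since $\beta_H$ is continuous and non-decreasing, the function $\phi_H$ of~\eqref{Irancy} is continuous and non-increasing in each coordinate, which is the key monotonicity fueling the whole argument.

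For the base case, let $L$ be a leaf with ancestor path $\emptyset \subseteq P_1 \subsetneq \cdots \subsetneq P_{m-1}=\text{parent}(L)$. Equation~\eqref{eq:leave} reduces to
\[
u_L=\phi_L(u_{P_1},\ldots,u_{P_{m-1}},u_L).
\]
With the ancestors' values fixed, the right-hand side is continuous and non-increasing in $u_L$, while the left-hand side is strictly increasing in $u_L$. Hence there is a unique solution $u_L=\psi_L(u_{P_1},\ldots,u_{P_{m-1}})$, and increasing any ancestor value only decreases the right-hand side pointwise, so $\psi_L$ is non-increasing in every coordinate.

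For the inductive step, let $H$ be internal with children $G_1,\ldots,G_q$ and assume $\psi_{G_i}$ has been built for every child with all the claimed properties. Equation~\eqref{eq:leave} at $H$ then reads
\[
u_H=\phi_H(u_{P_1},\ldots,u_{P_{m-1}},u_H)+\sum_{i=1}^{q}\psi_{G_i}(u_{P_1},\ldots,u_{P_{m-1}},u_H),
\]
where the $\psi_{G_i}$ encode the consistent values of the descendants of $G_i$ as a function of the ancestors of $G_i$, and where $u_H$ is itself one of those ancestors. By the induction hypothesis each $\psi_{G_i}$ is non-increasing in $u_H$, and $\phi_H$ is also non-increasing in $u_H$, so the whole right-hand side is non-increasing in $u_H$ while the left-hand side is strictly increasing; there is therefore at most one solution $u_H=\psi_H(u_{P_1},\ldots,u_{P_{m-1}})$. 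The same monotonicity argument, applied now to an ancestor coordinate $u_{P_i}$, shows $\psi_H$ is non-increasing in each argument. Applying the construction at the root $\emptyset$ (which has no ancestors) yields a unique $u_\emptyset$, and then descending the tree uniquely determines every $u_G$, which gives uniqueness of the solution of~\eqref{fp}. Existence at each step is obtained from continuity together with the fact that the right-hand side is bounded (since $\phi_H$ is bounded under the standing assumptions), or alternatively by invoking Proposition~\ref{prop:fp_exist}.

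The main obstacle is propagating the monotonicity cleanly through the induction: one must check that even though each $\psi_{G_i}$ is non-increasing in its ancestor arguments only in a coordinatewise sense, this suffices to make the right-hand side above globally non-increasing in $u_H$ at fixed ancestors. A secondary technical point is ensuring continuity (and not merely monotonicity) of the $\psi_H$ so that the fixed-point equation at each parent actually admits a solution; this follows from a standard argument based on monotone convergence of the fixed points of a continuous non-increasing family. Once these two ingredients are in place, the recursion gives uniqueness automatically.
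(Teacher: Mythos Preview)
Your approach is essentially identical to the paper's: both process the tree from the leaves to the root, using the fact that $\phi_H$ is continuous and non-increasing to show that, for fixed ancestor values, the recursive equation~\eqref{eq:leave} at each node has a unique solution depending continuously and non-increasingly on those ancestors, and then solve the root equation to pin down the whole vector. The paper denotes your $\psi_H$ by $F_H$ and asserts the continuity of these implicitly defined functions with the same level of informality that you flag as a secondary technical point.
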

\begin{proof}
Let $H$  be a  maximal element on  ${\cal T}$ for  the relation  $\subseteq $, i.e., $H$ is a
leaf, and denote by $P(H)$ the parent of node~$H$. Equation~(\ref{eq:leave}) then writes
\begin{equation}
\label{eq:leave2}
u_H=\phi_H(u_{[\emptyset, P(H)]}, u_H).
\end{equation}
The function $\phi_H$ being  non-increasing and continuous,  for a fixed vector
$u_{[\emptyset, P(H)]}$, there exists  a unique solution $u_H=F_H(u_{[\emptyset, P(H)]})\geq 0$ to the above equation. 
Furthermore, the function $ u_{[\emptyset, P(H)]}\to F_H(u_{[\emptyset, P(H)]})$ is
continuous and non-increasing. For such an $H$, for $H'=P(H)$, Relation~\eqref{eq:leave} can then be written as 
\[
u_{H'}=\phi_{H'}(u_{[\emptyset, {P(H')}]},u_{H'}) + \sum_{G\in {\cal T},   {H'}\vdash G}  F_G(u_{[\emptyset,P(H')]},u_{H'}).
\]
Since $\phi_{H'}$ and $F_G$, with~$G$ a leaf, are non-increasing and continuous, there exists a
unique solution $u_{H'}=F_{H'}(u_{[\emptyset, {P(H')}]})\geq 0$ and the function
\[
u_{[\emptyset, P(H')]}\to F_{H'}(u_{[\emptyset, P(H')]}),
\] 
is continuous and non-increasing. By induction (by decreasing level of nodes), one obtains that a family of continuous, non-increasing functions
$F_G$, $G\in{\cal T}$, $G\not=\emptyset$, exists, such that, for a fixed vector $u_{[\emptyset, {P(G)}]}$, $u_G=F_{G}(u_{[\emptyset, {P(G)}]})$ is the unique solution  of 
\begin{equation*}
u_G=\phi_{G}(u_{[\emptyset, {P(G)}]},u_G) 
+ \sum_{G'\in {\cal T},   {G}\vdash G'} F_{G'}(u_{[\emptyset,P(G)]},u_G).
\end{equation*}
Equation~\eqref{eq:leave}  at the root then writes 
\[
u_\emptyset= \phi_\emptyset(u_\emptyset)+\sum_{G\in {\cal T}, \emptyset\vdash G}F_G(u_\emptyset),
\]
and this equation has a unique solution $\bar{u}_\emptyset$. Now, one defines recursively (by increasing level of nodes) 
\[
\bar{u}_G=F_G(\bar{u}_{[\emptyset,P(G)]}), \ \  G\in{\cal T}.
\]
Then clearly  $(\bar{u}_G, G\in{\cal T})$ satisfies Relation~\eqref{eq:leave} and
is the unique solution.
\end{proof}

\section{Linear topologies}\label{secLinear}
In this section we consider a linear network with $J$ nodes and $K=J+1$ classes of connections. Class $j$ connections,  $1\leq j\leq J$, use node~$j$ only, while class~0 connections use all $J$ nodes. Assume $A_{jk}=1$ if class $k$
uses node $j$, and 0 otherwise. 
Equation~(\ref{fp}) is in this case
\begin{equation}
\label{eq:lin}
u_j =\phi_{0}(u)+ \phi_j(u_j), \ 1\leq j\leq J,  
\end{equation}
with $u=(u_1,\ldots, u_J)$.

\begin{figure}[ht]
\begin{center}
\scalebox{.3}{\includegraphics{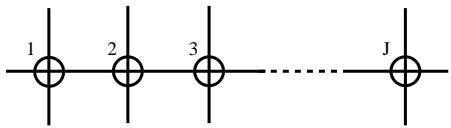}}
\end{center}
\caption{A linear network with $J$ nodes and $K=J+1$ classes of connections}
\end{figure}

\begin{proposition}
If  the functions, $\beta_k,  0\leq k\leq J$,  are  continuous and  non-decreasing, then there exists a unique solution for the 
fixed point equation~\eqref{fp}. 
\end{proposition}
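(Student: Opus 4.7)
The plan is to reduce the $J$-dimensional system~\eqref{eq:lin} to a single scalar fixed point equation by exploiting the structural fact that only the shared class~$0$ couples the nodes. First, I would record that every $\phi_k$ is continuous and non-increasing in each coordinate of $u$, since by~\eqref{Irancy} it is proportional to $1/\sqrt{\beta_k(u)}$ with $\beta_k$ continuous and non-decreasing.

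Second, for each $1\leq j\leq J$ and any parameter $c\geq 0$, I would solve the local equation $x=c+\phi_j(x)$. The map $x\mapsto x-\phi_j(x)$ is continuous and strictly increasing (the identity is strictly increasing and $-\phi_j$ is non-decreasing in $x$), equals $-\phi_j(0)\leq 0\leq c$ at $x=0$, and tends to $+\infty$ as $x\to\infty$; hence there is a unique solution which I denote $G_j(c)$. A standard monotonicity/continuity argument then shows that $c\mapsto G_j(c)$ is continuous and non-decreasing.

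Third, I would observe that $u$ solves~\eqref{eq:lin} if and only if, setting $s=\phi_0(u)$, we have $u_j=G_j(s)$ for every $j$ together with the scalar identity $s=H(s)$, where
\[
H(s):=\phi_0\bigl(G_1(s),\ldots,G_J(s)\bigr).
\]
The function $H$ is continuous and non-increasing, since $\phi_0$ is non-increasing in each coordinate while each $G_j$ is non-decreasing. Consequently $s\mapsto s-H(s)$ is continuous and strictly increasing, non-positive at $s=0$ (because $H(0)\geq 0$), and tends to $+\infty$. By the intermediate value theorem it has a unique zero $s^*$, and $u^*_j=G_j(s^*)$ is then the unique solution of~\eqref{eq:lin}.

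The main subtlety will be the strict monotonicity needed for uniqueness: both the local map $x\mapsto x-\phi_j(x)$ and the scalar map $s\mapsto s-H(s)$ are strictly increasing only thanks to the identity term, not thanks to any strict monotonicity of $\phi_k$ (which is merely assumed non-decreasing). This is precisely where the linear topology matters, since class~$j\geq 1$ using only node~$j$ is what allows the decoupling into one-dimensional problems indexed by the single shared variable $s=\phi_0(u)$; without this structure the reduction would not go through.
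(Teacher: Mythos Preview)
Your proposal is correct and follows essentially the same approach as the paper: invert the local equations $u_j=c+\phi_j(u_j)$ (the paper writes $\bar\phi_j(x)=x-\phi_j(x)$ and uses $\bar\phi_j^{-1}$, which is exactly your $G_j$) and then reduce to a single scalar fixed point equation whose right-hand side is non-increasing. The only cosmetic difference is the choice of scalar unknown: the paper parametrizes by $\beta=\beta_0(u)$ and solves $\beta=\beta_0(\psi_1(\beta),\ldots,\psi_J(\beta))$, while you parametrize by $s=\phi_0(u)=\alpha_0/\sqrt{\beta_0(u)}$ and solve $s=H(s)$; since $s\mapsto\beta$ is a strictly decreasing bijection, the two reductions are equivalent.
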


\begin{proof}
Let $\bar \phi_{j}(x)=x- \phi_j(x)$, $x\in\mathbb{R}$,  which is continuous and non-decreasing.
Hence,~(\ref{eq:lin}) can be rewritten as
\begin{equation}
\label{eq:uj}
u_j=\bar \phi_j^{-1}\left(\phi_{0}(u)\right)=\bar \phi_j^{-1}\left(\frac{\alpha_{0}}{\sqrt{\beta_{0}(u)}}\right), \ 1\leq j\leq J.
\end{equation}
for some constant $\alpha_0$, see Equation~\eqref{Irancy}.
In addition, define the function
$\psi_j(x)=\bar\phi_j^{-1}\left({\alpha_{0}}/{\sqrt{x}}\right)$,  $x\in\mathbb{R}$,
which is continuous and  non-increasing. 
>From (\ref{eq:uj}) we obtain the relation 
$$
\beta_{0}(u) = \beta_{0}\left(\psi_1 (\beta_{0}(u) ),\ldots, \psi_J(\beta_{0}(u)) \right).
$$
Since $\beta_{0}$ is non-decreasing and $\psi_j$ is non-increasing, the fixed point
equation $\beta = \beta_{0}(\psi_1(\beta),\ldots,\psi_J(\beta))$ has a unique solution
$\beta^*\geq 0$. Hence, the Relation~(\ref{eq:uj}) has a unique fixed point, which is
given by $u^*_j=\bar \phi_j^{-1}\left({\alpha_{0}}/{\sqrt{\beta^*}}\right)$. 
\end{proof}

\section{Ring topologies}\label{Torus}
In this section, the topology of the network is based on a ring.  Several situations are
considered for the paths of the connections.
\subsection*{Routes with two consecutive nodes}
It is assumed that there are $J$ nodes and $K=J$ classes of connections and class
$j\in\{1,\ldots, J\}$ uses two nodes: node $j$ and $j+1$. Assume $A_{jk}=1$ if class $k$
uses node $j$, and 0 otherwise. 
Equation~\eqref{fp} is in this case
\begin{equation}
\label{eq:torus_u_new}
u_j=   \phi_{j-1}( u_{j-1}, u_j)  + \phi_{j}(u_{j}, u_{j+1}),  \ j=1,\ldots, J.
\end{equation}
For  $y_j= \phi_j(u_j,u_{j+1})$, the above equation can be rewritten as follows
\begin{equation}
\label{eq:lin_y}
y_j= \phi_j(y_{j-1}+ y_j, y_j+ y_{j+1}), \ j=1,\ldots, J.
\end{equation}

\begin{figure}[ht]
\begin{center}
\scalebox{.3}{\includegraphics{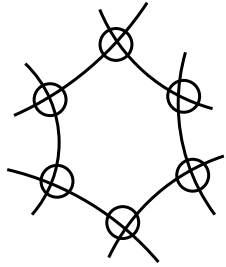}}
\end{center}
\caption{Routes with two consecutive nodes}
\end{figure}

\begin{proposition}
If the functions $\beta_k$, $1\leq k\leq K$, are  continuous, non-decreasing and satisfy the assumptions of
Proposition~\ref{prop:fp_exist}, then there exists a unique solution for the fixed point
equation~\eqref{fp}.
\end{proposition}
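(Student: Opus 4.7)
The plan is to work with the reformulated system~\eqref{eq:lin_y}, i.e.\ $y_j = \phi_j(y_{j-1}+y_j, y_j+y_{j+1})$ for $j=1,\ldots,J$ (indices modulo~$J$). A first, routine, step is to verify that~\eqref{eq:torus_u_new} and~\eqref{eq:lin_y} are in bijection: given $u$ solving~\eqref{eq:torus_u_new}, the quantities $y_j = \phi_j(u_j,u_{j+1})$ solve~\eqref{eq:lin_y}; conversely, given $y$ solving~\eqref{eq:lin_y}, setting $u_j = y_{j-1}+y_j$ recovers a solution of~\eqref{eq:torus_u_new}. Existence is then granted by Proposition~\ref{prop:fp_exist}, under the assumptions explicitly imposed in the statement. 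Only uniqueness is nontrivial, and it suffices to prove it in the $y$-variables.

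For uniqueness I would use a maximum principle in sup-norm. Let $y$ and $\tilde y$ be two solutions of~\eqref{eq:lin_y}, set $\delta_j = y_j - \tilde y_j$, and pick an index $j^*$ realizing $\max_j |\delta_j|$; by swapping $y$ and $\tilde y$ if needed, one may assume $\delta_{j^*} = \max_j |\delta_j| \geq 0$. Subtracting the two equations at $j=j^*$ gives
\[
\delta_{j^*} = \phi_{j^*}(y_{j^*-1}+y_{j^*},\, y_{j^*}+y_{j^*+1}) - \phi_{j^*}(\tilde y_{j^*-1}+\tilde y_{j^*},\, \tilde y_{j^*}+\tilde y_{j^*+1}).
\]
The two coordinate-wise differences appearing inside $\phi_{j^*}$ are $\delta_{j^*-1}+\delta_{j^*}$ and $\delta_{j^*}+\delta_{j^*+1}$; since $|\delta_{j^*\pm 1}|\leq \delta_{j^*}$, both are non-negative. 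Because $\phi_{j^*}$ is non-increasing in each of its arguments, the right-hand side is therefore $\leq 0$, which forces $\delta_{j^*}\leq 0$. Combined with $\delta_{j^*}\geq 0$ this yields $\max_j|\delta_j|=0$, so $y=\tilde y$, and going back through the bijection gives uniqueness for~\eqref{eq:torus_u_new}.

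The structural feature that makes the argument work is that in~\eqref{eq:lin_y} the unknown $y_j$ appears on the \emph{diagonal} inside the monotone $\phi_j$, so that at a maximizer of $|\delta_j|$ its own contribution dominates those of the neighbors and fixes the sign of both increments. The main obstacle I anticipate is actually identifying the right reformulation: applying the same maximum principle directly to~\eqref{eq:torus_u_new} does not work, because the signs of the neighboring increments $\delta_{j\pm 1}$ in the arguments of $\phi_{j-1}$ and $\phi_j$ are uncontrolled. Once one passes from $u$ to $y$, the diagonal cancellation does the rest, and no quantitative hypothesis on $\phi_j$ (such as strict monotonicity or a Lipschitz bound) is required beyond the assumed continuity and non-decrease of $\beta_j$.
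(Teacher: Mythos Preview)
Your argument is correct, and it is genuinely different from the paper's. Both proofs work with the reformulation~\eqref{eq:lin_y} and invoke Proposition~\ref{prop:fp_exist} for existence, but the uniqueness arguments diverge. The paper first rules out the possibility that one solution strictly dominates the other, then locates an index $m$ where the ordering between the two solutions switches sign, and propagates a chain of inequalities $c_m\le d_{m+1}\le c_{m+2}\le\cdots$ around the ring (with $c_j=x_j-y_j$ and $d_j=-c_j$); after a full cycle this forces all differences to coincide, hence to vanish. Your approach is instead a one-step maximum principle: at the index $j^*$ where $|\delta_j|$ is largest, the diagonal occurrence of $y_{j^*}$ in both arguments of $\phi_{j^*}$ guarantees that both increments $\delta_{j^*-1}+\delta_{j^*}$ and $\delta_{j^*}+\delta_{j^*+1}$ are nonnegative, so monotonicity of $\phi_{j^*}$ immediately yields $\delta_{j^*}\le 0$. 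This is shorter, avoids the case analysis and the orientation choice $d_{m-1}\le d_{m+1}$, and makes transparent the structural reason (diagonal dominance in the $y$-variables) why the argument fails on~\eqref{eq:torus_u_new} but succeeds on~\eqref{eq:lin_y}. The paper's cycling argument, on the other hand, is closer in spirit to the proofs of the later ring propositions and perhaps suggests how to handle variants where a single extremal index does not settle the matter at once.
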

\begin{proof}
>From Proposition~\ref{prop:fp_exist} we have that Equation~\eqref{eq:lin_y} has at least one
fixed point solution. Let $x=(x_j:j=1,2,\ldots, J)$ and
$y=(y_j:j=1,2,\ldots,J)$ both be fixed points.

If the relation $y_j<x_j$ holds for all $j=1,\ldots,J$, then the inequality 
\[
\phi_j( y_j+ y_{j-1}, y_{j+1}+y_j)=y_j< x_j=\phi_j( x_j+ x_{j-1}, x_{j+1}+x_j),
\]
and the fact that the function $\phi_j$ is non-increasing, give directly a  contradiction.

Consequently, possibly up to an exchange of $x$ and $y$, one can assume that there exists  $m\in
\{1,\ldots, J\}$ such that $y_m\leq x_m$ and $y_{m+1}\geq x_{m+1}$.
Define $c_j=x_j-y_j$ and $d_{j}=y_{j}-x_{j}$.  Hence, $c_m\geq 0$ and $d_{m+1}\geq 0$.
Without loss of generality, it can be  assumed that the classes are
ordered such that $d_{m-1}\leq d_{m+1}$.
Since the function $\phi_m$ is non-increasing, and 
\[
\phi_m( 
y_m +y_{m-1}, y_m+ y_{m+1} )=y_m \leq x_m= \phi_m( x_m +x_{m-1}, x_m+ x_{m+1}),
\]
we have that either
\[
    y_m +y_{m-1}
    \geq
   x_m +x_{m-1} \text{ and/or }     y_m +y_{m+1}
    \geq
   x_m +x_{m+1},
\]
 i.e., $\ d_{m-1}\geq c_m$ and/or $\ d_{m+1} \geq c_m$.  Because $d_{m-1}\leq d_{m+1},$
then, necessarily, $ d_{m+1} \geq  c_m\geq 0$. Hence 
\begin{multline*}
\phi_{m+1}( y_{m+1} +y_{m}, 
y_{m+1}+y_{m+2})=y_{m+1}\\ \geq x_{m+1}= \phi_{m+1}( x_{m+1} +x_{m}, x_{m+1}+x_{m+2}). 
\end{multline*}
Since $\phi_{m+1}$ is non-increasing, 
one has  $y_{m+1} + y_{m+2} \leq  x_{m+1} + x_{m+2}$ and consequently  $d_{m+1} \leq c_{m+2}$.

>From $0\leq c_m\leq d_{m+1} \leq  c_{m+2}$, we obtain  $x_{m+2}\geq y_{m+2}$, which, using the same steps as before, implies $c_{m+2}\leq d_{m+3}$.
In particular, by  induction it can be concluded that 
$$c_j\leq d_{j+1} \leq c_{j+2} \leq d_{j+3}, \ \text{for all } j=1,\ldots, J,$$
where the indices $j+1, j+2$, and $j+3$ are considered as modulo $J$. 
This implies that $c_j=d_j=c$, for all $j=1,\ldots,L$, and hence  $y_j+y_{j-1}=x_j+x_{j-1}$, i.e.,     $$y_j= \phi_j(y_j+y_{j-1}, y_j+y_{j+1}) = 
\phi_j( x_j+x_{j-1}, x_j+x_{j+1})=x_j,$$ for $j=1,\ldots,J$.
We can conclude that the fixed point is unique.
\end{proof}

The rest of this part will be devoted to a contraction argument that can be used to get a
unique solution to the fixed point equation.

\begin{proposition}\label{Bell}
If the functions $\beta_k$, $1\leq k\leq K$, are Lipschitz, continuous differentiable, and non-decreasing, then there exists a unique solution for the fixed point
equation~\eqref{fp}.
\end{proposition}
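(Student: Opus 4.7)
The plan is to rewrite equation~\eqref{eq:lin_y} as a fixed-point equation $y = T(y)$ with $T_j(y) = \phi_j(y_{j-1}+y_j,\,y_j+y_{j+1})$ for $j=1,\ldots,J$ (indices taken modulo $J$), and to apply Banach's fixed point theorem on a suitable invariant compact box $B \subset \R_+^J$. Existence of a fixed point is already covered by Proposition~\ref{prop:fp_exist} (the Lipschitz hypothesis is compatible with $\beta_k(0)>0$, which is needed so that $\Phi(0)<\infty$), so the focus is on showing that $T$ is a strict contraction on $B$.

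First I would choose $B$: since $\phi_j$ is decreasing in each of its arguments and $\phi_j(0,0)=c_j/\sqrt{\beta_j(0)}$ is finite (here $c_j$ denotes the positive constant of equation~\eqref{Irancy}), the box $B = \prod_j[0,\,c_j/\sqrt{\beta_j(0)}]$ is invariant under $T$, and on $B$ one has $\beta_j(\cdot)\geq\beta_j(0)>0$. From the form $\phi_j(u)= c_j/\sqrt{\beta_j(u)}$, continuous differentiability of $\beta_j$ gives that $T$ is of class $C^1$ on $B$ with
\[
\partial_l T_j(y) \;=\; -\,\frac{c_j}{2\,\beta_j(\cdot)^{3/2}}\,\partial_l \beta_j(\cdot), \qquad l\in\{j-1,\,j,\,j+1\},
\]
(the contributions suitably combined at $l=j$). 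Lipschitz continuity of $\beta_j$ bounds $|\partial_l\beta_j|$, and the lower bound on $\beta_j$ then yields a uniform bound on each entry of $DT(y)$ over $B$. Structurally, $DT$ is cyclic tridiagonal with non-positive entries, reflecting the fact that each $\phi_j$ is decreasing.

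The main obstacle is to upgrade these pointwise bounds to a strict contraction in some operator norm; a direct Gershgorin bound on $\|DT\|_\infty$ does not unconditionally yield a contraction. The natural route is to exploit the self-consistent identity $y_j=c_j/\sqrt{\beta_j(\cdot)}$ that holds at any fixed point, which rewrites $c_j/\beta_j^{3/2}$ as $y_j/\beta_j$ and ties the magnitude of the entries of $DT$ to $y$ itself. Working in a weighted $\ell^\infty$ or $\ell^1$ norm with weights tuned to this geometry, or equivalently invoking a Perron--Frobenius argument on the non-negative matrix $-DT$ to control its leading eigenvalue, should deliver the strict contractivity. Once $\|DT\|<1$ on $B$ in such a norm, Banach's fixed point theorem yields a unique fixed point of $T$, and the bijection between $y$ and $u$ established in Section~\ref{secOne} then gives a unique solution of~\eqref{fp}.
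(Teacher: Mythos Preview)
Your proposal has a genuine gap, and the paper's proof contains precisely the idea that closes it.

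You define $T_j(y)=\phi_j(y_{j-1}+y_j,\,y_j+y_{j+1})$, keeping the dependence on $y_j$ itself. As you note, the row-sum of $|DT|$ is then
\[
|\partial_{j-1}T_j|+|\partial_jT_j|+|\partial_{j+1}T_j|
= 2\bigl(|\partial_1\phi_j|+|\partial_2\phi_j|\bigr),
\]
and nothing in the hypotheses prevents this from exceeding $1$. Your suggested fixes do not work: the identity $y_j=c_j/\sqrt{\beta_j(\cdot)}$ holds only \emph{at} a fixed point, so it cannot produce a contraction on a box; and even the spectral radius of $DT$ at a fixed point can exceed $1$ (when $|\partial_1\phi_j+\partial_2\phi_j|>1$ the diagonal alone already forces it), so a Perron--Frobenius or weighted-norm argument on $-DT$ has no reason to succeed. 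In one dimension the phenomenon is transparent: $y\mapsto\phi(2y)$ has a unique fixed point since $\phi$ is decreasing, yet the iteration oscillates whenever $|2\phi'|>1$.

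The paper's trick is to absorb the diagonal first. Using the Implicit Function Theorem, one defines $x_j(y_{j-1},y_{j+1})$ as the unique solution of
\[
x_j=\phi_j\bigl(y_{j-1}+x_j,\;x_j+y_{j+1}\bigr),
\]
and studies the map $\widetilde T:(y_j)\mapsto\bigl(x_j(y_{j-1},y_{j+1})\bigr)$. Differentiating implicitly gives
\[
\Bigl|\tfrac{\partial x_j}{\partial y_{j-1}}\Bigr|
+\Bigl|\tfrac{\partial x_j}{\partial y_{j+1}}\Bigr|
=\frac{-(\partial_1\phi_j+\partial_2\phi_j)}{1-(\partial_1\phi_j+\partial_2\phi_j)}
=\frac{a}{1+a},\qquad a:=-\bigl(\partial_1\phi_j+\partial_2\phi_j\bigr)\ge 0,
\]
which is automatically strictly less than $1$. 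A uniform upper bound on $a$ (obtained from the Lipschitz bound on $\beta_j$ and a positive lower bound on the arguments $y_{j-1}+x_j$ and $x_j+y_{j+1}$, which the paper extracts from $x_j(0,\cdot)>0$) then gives a constant $C<1$ and makes $\widetilde T$ a genuine $\ell^\infty$-contraction. This ``implicit diagonal'' step is the missing ingredient in your argument.
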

\begin{proof}
The proof consists in showing that~(\ref{eq:lin_y}) has a unique solution. 
By the Implicit function theorem, there exists a unique 
$x_{j}(y_{j-1}, y_{j+1})$ such that, 
\begin{equation} \label{eq:torus_func}
x_{j}(y_{j-1}, y_{j+1})= \phi_j( y_{j-1} + x_{j}(y_{j-1}, y_{j+1}) ,  x_{j}(y_{j-1}, y_{j+1}) + y_{j+1}),
\end{equation}
and this function $(y_{j-1}, y_{j+1})\to x_j(y_{j-1}, y_{j+1})$ is positive and continuous
differentiable. Taking the partial derivative to
$y_{j-1}$ on both sides of this identity, one gets that 
\begin{multline*}
\frac{\partial x_{j}(y_{j-1}, y_{j+1})}{\partial y_{j-1}}
= \left. \frac{\partial \phi_j(s_1,s_2)}{\partial s_1}\right|_{s=s(y)} \times \left(1 +   \frac{\partial x_{j}(y_{j-1}, y_{j+1})}{\partial y_{j-1}}\right)
\\+
\left.  \frac{\partial \phi_j(s_1,s_2)}{\partial s_2}\right|_{s=s(y)} \times \frac{\partial x_{j}(y_{j-1}, y_{j+1})}{\partial y_{j-1}},
\end{multline*}
with $s(y)=(  y_{j-1} +  x_{j}(y_{j-1}, y_{j+1}) , x_{j}(y_{j-1}, y_{j+1}) +  y_{j+1})$. Hence,
\[
\frac{\partial x_{j}(y_{j-1}, y_{j+1})}{\partial y_{j-1}}
= \left.\left[\frac{\partial \phi_j(s_1,s_2)}{\partial s_1}
\left/\left(1- \frac{\partial \phi_j(s_1,s_2)}{\partial s_1} -
 \frac{\partial \phi_j(s_1,s_2)}{\partial s_2}\right)\right.\right]\right|_{s=s(y)}  {\leq} 0.
\]
A similar expression holds for 
$
{\partial x_{j}(y_{j-1}, y_{j+1})}/{\partial y_{j+1}}\leq 0,
$
and one can conclude that 
\begin{align}
\label{eq:torus_sum_der}
&\left|\frac{\partial x_{j}(y_{j-1}, y_{j+1})}{\partial y_{j-1}}\right|+\left|\frac{\partial x_{j}(y_{j-1}, y_{j+1})}{\partial y_{j+1}}\right|
\\&= -\left[\left(\frac{\partial \phi_j(s_1,s_2)}{\partial s_1} +
\frac{\partial \phi_j(s_1,s_2)}{\partial s_2}\right) 
\left/
\left(1- \frac{\partial \phi_j(s_1,s_2)}{\partial s_1} - 
\frac{\partial \phi_j(s_1,s_2)}{\partial s_2}
\right)
\right]
\right|_{s=s(y)}.\notag
\end{align}
If $x_j(0,y_{j+1})=0$, then by Relation~\eqref{eq:torus_func}, one
gets that, for some constant $\alpha_j$, see Equation~\eqref{Irancy},
\[
0=\phi_j(0, y_{j+1})=\alpha_j/\sqrt{\beta_j(0,y_{j+1})},
\]
which holds only if $y_{j+1}=\infty$. Hence,  $x_j(0,y_{j+1})>0$.
Since $x_j(y_{j-1},y_{j+1})$ is continuous and positive, and  $x_j(0,y_{j+1})>0$, one
obtains that there exists an $M_j^->0$ such that $y_{j-1} +  x_{j}(y_{j-1},
y_{j+1})>M_j^-$ for all $y_{j-1}, y_{j+1}\geq 0$. 
Similarly, there exists an $M_j^+>0$ such that $x_{j}(y_{j-1}, y_{j+1})+y_{j+1}>M_j^+$.
This gives the following upper bound,
\begin{multline*}
-\left. \frac{\partial \phi_j(s_1,s_2)}{\partial s_i}\right|_{s=s(y)}=\frac{\alpha_j}{2} 
{\left.\frac{\partial \beta_j (s_1,s_2)}{\partial
    s_i}\right|_{s=s(y)}}{\beta_j(s(y))^{-3/2}}\\ \leq
\frac{\alpha_j}{2}\frac{L}{(\beta_j(M_j^-,M_j^+))^{3/2}},
\end{multline*}
where we used that $\beta_j$ is non-decreasing, Lipschitz continuous (with constant $L$) and  differentiable.
>From Equation~\eqref{eq:torus_sum_der}) one now obtains that there exists a constant $0<C<1$ such that
\[
\left|\frac{\partial x_{j}(y_{j-1}, y_{j+1})}{\partial
  y_{j-1}}\right|+\left|\frac{\partial x_{j}(y_{j-1}, y_{j+1})}{\partial
  y_{j+1}}\right|<C.
\]
Hence, the mapping $T:\mathbb{R}_+^J\to \mathbb{R}_+^J$ with $T(y)=
(x_j(y_{j-1},y_{j+1})$ for $ j=1,\ldots,J)$ is a contraction, and has a unique fixed point
$(y^*_j)$, i.e.,  Equation~\eqref{eq:lin_y} has a unique solution.
\end{proof}

\subsection*{Routes with one node or two consecutive nodes}
Consider now a ring with $J$ nodes and  $K=2J$ classes.   Class $j$ uses two nodes: nodes $j$ and $j+1$, $j=1,\ldots, J$. Class $0j$ uses one node:
node  $j$, $j=1,\ldots, J$. 
We assume that  $A_{jk}=1$ if
and only if class~$k$ uses node~$j$, and zero otherwise.

\begin{figure}[ht]
\begin{center}
\scalebox{.3}{\includegraphics{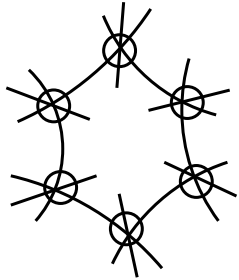}}
\end{center}
\caption{Routes with one node or two consecutive nodes}
\end{figure}
We focus on functions $\beta_k$ that satisfy~(\ref{eq:lossr}).  Equation~\eqref{fp} is in this context 
\[
u_j=\phi_{0j}(u_j)+\phi_{j-1}(u_{j-1}+u_j) +\phi_j(u_j+u_{j+1}),\quad 1\leq j\leq J.
\]
For $y_j=\phi_j(u_{j}+u_{j+1})$ and $y_{0j}= \phi_{0j}(u_j)$, $j=1\ldots,J$, the above
equation can be rewritten as follows, for $j=1,\ldots, J$,
\begin{equation}\label{eq:torusB_oj}
\begin{cases}
y_j&=\phi_j(y_{j-1}+2y_j + y_{j+1}+y_{0j}+ y_{0j+1} ),\\
y_{0j}&=\phi_{0j}( y_{0j}+y_{j-1}+y_{j} ).
\end{cases}
\end{equation}
\begin{proposition}
If the functions $\beta_k$, $1\leq k\leq J$,  $01\leq k\leq 0J$, are Lipschitz, continuously differentiable, 
non-decreasing, and  satisfy~(\ref{eq:lossr}),  
then there exists a unique solution for the fixed point equation~\eqref{fp}.
\end{proposition}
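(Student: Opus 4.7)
The plan is to extend the contraction argument of Proposition~\ref{Bell} to the enlarged system~\eqref{eq:torusB_oj}. Existence already follows from Proposition~\ref{prop:fp_exist} (the monotonicity and $\Phi(0)<\infty$ hypotheses are automatically satisfied under the standing assumptions). For uniqueness, I first eliminate the single-node variables $y_{0j}$ via the Implicit Function Theorem. The equation $y_{0j} = \phi_{0j}(y_{0j} + y_{j-1} + y_j)$ has strictly negative derivative in $y_{0j}$, so it admits a unique positive $C^1$ solution $y_{0j}=h_j(y_{j-1},y_j)$. Crucially, the right-hand side is symmetric in $y_{j-1}$ and $y_j$, so $\partial_1 h_j = \partial_2 h_j$ at every point, both equal to $\phi_{0j}'/(1-\phi_{0j}')\in(-1,0]$. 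Set $b_j := -\partial_1 h_j = -\partial_2 h_j \in [0,1)$.

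Substituting $h_j$ and $h_{j+1}$ into the first equation of~\eqref{eq:torusB_oj} gives
\[
y_j = \phi_j\bigl(y_{j-1} + 2y_j + y_{j+1} + h_j(y_{j-1},y_j) + h_{j+1}(y_j,y_{j+1})\bigr),
\]
and a second application of IFT yields $y_j = g_j(y_{j-1},y_{j+1})$, positive and $C^1$. Differentiating this identity and exploiting $\partial_1 h_i = \partial_2 h_i$, a direct computation with $a_j := -\phi_j'(s_j) \ge 0$ gives
\[
\frac{\partial g_j}{\partial y_{j-1}} = \frac{-a_j(1-b_j)}{1+a_j(2-b_j-b_{j+1})},
\qquad
\frac{\partial g_j}{\partial y_{j+1}} = \frac{-a_j(1-b_{j+1})}{1+a_j(2-b_j-b_{j+1})},
\]
so that
\[
\left|\frac{\partial g_j}{\partial y_{j-1}}\right| + \left|\frac{\partial g_j}{\partial y_{j+1}}\right| = \frac{c_j}{1+c_j},\qquad c_j := a_j(2-b_j-b_{j+1}) \ge 0,
\]
which is automatically strictly less than $1$. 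This clean collapse is the key gain from the symmetry of $h_j$: the extra terms produced by the single-node routes do not destroy the contraction estimate used in Proposition~\ref{Bell}.

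The remaining task, exactly as in the proof of Proposition~\ref{Bell}, is to obtain a uniform upper bound on $c_j$, so that the map $T:(y_1,\ldots,y_J) \mapsto (g_j(y_{j-1},y_{j+1}))_{j=1}^J$ is a strict $\ell^\infty$-contraction with a unique fixed point. Using~\eqref{eq:lossr} and the formula $a_j = \alpha_j\beta_j'(s_j)/(2\beta_j(s_j)^{3/2})$, it suffices to bound $s_j$ from below. This follows from positivity of $g_j$ and $h_j$: if $g_j(0,0)=0$ then $\phi_j(\cdot)$ would have to vanish, forcing $\beta_j=\infty$, contradicting Lipschitz continuity; hence $g_j, h_j \ge \text{const}>0$ on any bounded range, giving $c_j \le C_0$ and a contraction constant $C_0/(1+C_0)<1$. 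The main obstacle in carrying this out is the bookkeeping required to make the lower bound on $s_j$ and $\tau_j$ uniform in the relevant range of iterates; once this is handled as in Proposition~\ref{Bell}, the contraction principle yields a unique $(y^*_j)$, and then $(y^*_{0j}) = (h_j(y^*_{j-1}, y^*_j))$ gives the unique solution of~\eqref{eq:torusB_oj}.
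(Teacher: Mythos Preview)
Your proposal is correct and follows essentially the same route as the paper: eliminate $y_{0j}$ via the Implicit Function Theorem (the paper writes this as a single-variable function $x_{0j}(t)$ of $t=y_{j-1}+y_j$, which is why your two partials of $h_j$ coincide), substitute into the $y_j$ equation, apply IFT again, and bound the sum of absolute partials by a constant $C<1$ using the lower bound on the argument of $\phi_j$ exactly as in Proposition~\ref{Bell}. Your explicit simplification $|\partial g_j/\partial y_{j-1}|+|\partial g_j/\partial y_{j+1}|=c_j/(1+c_j)$ is a tidy observation the paper does not record, but the argument is otherwise identical; note also that the contraction itself already yields existence, so your appeal to Proposition~\ref{prop:fp_exist} (which would additionally require $\beta_k(0)>0$) is unnecessary.
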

\begin{proof}
By the Implicit function theorem, for each $j$,  there exists a unique 
$x_{0j}(t)$  satisfying the relation $x_{0j}(t)=\phi_{0j}(x_{0j}(t)+ t)$, and this function is  non-increasing and continuous differentiable.  
One  now has to solve the equation
\begin{equation}
\label{eq:torusB_yj}
y_j= \phi_j(y_{j-1}+2y_j +y_{j+1}+x_{0j}(y_{j-1}+y_j)+x_{0,j+1}(y_{j}+y_{j+1})) .
\end{equation}
>From the fact that $-1\leq x_{0j}'(t)\leq 0$, it can be easily checked that the right-hand side of Equation~(\ref{eq:torusB_yj}) is non-increasing in $y_j$.
%
%
Hence,  there exists a unique   $x_j(y_{j-1},y_{j+1})$ such that $y_j=x_j(y_{j-1},y_{j+1})$
satisfies Equation~\eqref{eq:torusB_yj}, 
and this function $(y_{j-1}, y_{j+1})\to x_j(y_{j-1}, y_{j+1})$ is positive and continuous
differentiable (by the Implicit function theorem).
In particular, $x_j(y_{j-1},y_{j+1})= \phi_j(f_j(y))$,  for all $j=1,\ldots, L$, with  
\begin{multline*} f_j(y)=y_{j-1}+2x_j(y_{j-1},y_{j+1}) +y_{j+1}\\+x_{0j}(y_{j-1}+x_j(y_{j-1},y_{j+1}))+x_{0,j+1}(x_j(y_{j-1},y_{j+1})+y_{j+1}). \end{multline*}
>From this one can derive that 
\begin{multline*}
\frac{\partial x_j(y_{j-1},y_{j+1})}{\partial y_{j-1}}= 
\left.\phi_j'(f_j(y))\left[1+x_{0j}'(y_{j-1}+x_j(y_{j-1},y_{j+1}))\right]\right/\\
\left[
\rule{0mm}{4mm}1- 
\phi_j'(f_j(y))(2+x_{0j}'(y_{j-1}+x_j(y_{j-1},y_{j+1}))\right.\\\left.\rule{0mm}{4mm} +x_{0,j+1}'(x_j(y_{j-1},y_{j+1})+y_{j+1})) \right] \leq 0,
\end{multline*}
and a similar expression holds for 
$
{\partial x_j(y_{j-1},y_{j+1})}/{\partial y_{j+1}}\leq 0.
$
As in the proof of Proposition~\ref{Bell}, an upper bound on $-\phi_j'(f_j(y))$ can be obtained. This implies that there exists a constant $0< C <1$ such that 
\[
\left|\frac{\partial
  x_j(y_{j-1},y_{j+1})}{\partial y_{j-1}}\right|+\left|\frac{\partial
  x_j(y_{j-1},y_{j+1})}{\partial y_{j+1}}\right| 
< C,\quad j=1,\ldots , J.
\]
Hence, the mapping $T:\mathbb{R}_+^J\to \mathbb{R}_+^J$ with $T(y_1, \ldots,
y_J)=(x_j(y_{j-1},y_{j+1}), j=1,\ldots, J)$ is a contraction, and  has a unique fixed point $(y_j^*)$. 
One concludes that there exists a unique solution $y_j^*$ and $y_{0j}^*=x_{0j}(y_{j-1}^*+y_j^*)$, $j=1,\ldots,J$,  of   
\eqref{eq:torusB_oj}.
\end{proof}

\subsection*{Routes with two consecutive nodes and a complete route}
Consider a ring with $J$ nodes and  $K=J+1$ classes.  Class $1\leq j\leq J$ uses two nodes: node  $j$ and $j+1$ and class $0$
uses all nodes $1,\ldots, J$.  

\begin{figure}[ht]
\begin{center}
\scalebox{.3}{\includegraphics{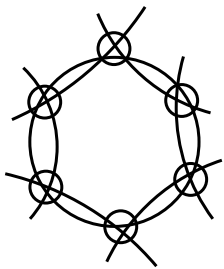}}
\end{center}
\caption{Routes with two consecutive nodes and a complete route}
\end{figure}
We focus on functions $\beta_k$ that satisfy~(\ref{eq:lossr}).
Equation~\eqref{fp} is in this context
\[
u_j= \phi_0\left(u_1+\cdots+u_J\right) +  \phi_{j-1}( u_{j-1} +u_j)  +\phi_{j}(u_{j}+ u_{j+1}).
\]
For  $y_j= \phi_j(u_j+u_{j+1})$ and  $y_0=\phi_0(u_1+u_2+\cdots+u_J),$   the above
equation can be rewritten as follows 
\begin{equation} \label{eq:torusA_yj}
\begin{cases}
y_j&=\phi_j(y_{j-1}+2y_j+ y_{j+1}+2y_0), \quad j=1,\ldots, J,\\
y_0&=\phi_0( Jy_0 + 2\sum_{j=1}^J y_j ).
\end{cases}
\end{equation}

\begin{proposition} 
If the functions $\beta_k$,  $0\leq k\leq J$,  are  continuous, non-decreasing, satisfy~(\ref{eq:lossr}), and satisfy the
assumptions of Proposition~\ref{prop:fp_exist}, then there exists a unique solution for the 
fixed point equation~\eqref{fp}. 
\end{proposition}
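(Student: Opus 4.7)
The plan is to reduce Equations~\eqref{eq:torusA_yj} to a scalar fixed-point equation in $y_0$, in analogy with the proofs of the preceding ring propositions. Existence of at least one solution is immediate from Proposition~\ref{prop:fp_exist} under the stated hypotheses, so only uniqueness needs argument.

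First, I fix $y_0 \geq 0$ and consider the $J$-dimensional cyclic subsystem
\begin{equation*}
y_j = \phi_j\bigl(y_{j-1} + 2y_j + y_{j+1} + 2y_0\bigr), \qquad j = 1,\ldots, J.
\end{equation*}
This has exactly the cyclic structure of Equation~\eqref{eq:lin_y}: the additive constant $2y_0$ inside the argument of $\phi_j$ is harmless for the cyclic monotonicity argument used in the first uniqueness proposition of this section, which relies only on the non-increasing property of each $\phi_j$ and on $y_j$ coupling solely to its cyclic neighbours $y_{j-1}, y_j, y_{j+1}$. Reading the comparison inequalities with a single summed argument (via the sum-form assumption~\eqref{eq:lossr}) rather than two separate coordinates, one obtains a unique solution $(y_j(y_0))_{j=1}^J$ for every $y_0 \geq 0$, depending continuously on $y_0$ by a standard stability argument for cyclic monotone fixed points.

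Substituting into the class~$0$ relation of~\eqref{eq:torusA_yj} reduces the problem to the scalar equation
\begin{equation*}
y_0 = \phi_0\!\left(Jy_0 + 2\sum_{j=1}^J y_j(y_0)\right) =: F(y_0),
\end{equation*}
for which existence of a solution is again guaranteed by Proposition~\ref{prop:fp_exist}.

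The main obstacle is the uniqueness of this scalar fixed point. Although $\phi_0$ is non-increasing, the argument of $\phi_0$ blends the strictly increasing term $Jy_0$ with the non-increasing terms $2y_j(y_0)$, so monotonicity of $F$ is not transparent. The key step I would pursue is to show that the aggregate $T(y_0) := Jy_0 + 2\sum_j y_j(y_0)$ is nevertheless strictly increasing in $y_0$: a quantitative comparison between the subsystem solutions at $y_0 < \tilde{y}_0$, obtained by running the cyclic monotonicity argument between $(y_j(y_0))$ and $(y_j(\tilde{y}_0))$, should yield a bound of the form
\begin{equation*}
0 \;\leq\; \sum_{j=1}^J \bigl(y_j(y_0) - y_j(\tilde{y}_0)\bigr) \;<\; \tfrac{J}{2}\bigl(\tilde{y}_0 - y_0\bigr),
\end{equation*}
i.e.\ each $y_j$ decreases by strictly less than the corresponding increment of $y_0$ in its $\phi_j$-argument, so $T(\tilde y_0) > T(y_0)$. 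Once this is in place, $F = \phi_0 \circ T$ is non-increasing, while the identity $y_0 \mapsto y_0$ is strictly increasing, so $y_0 - F(y_0)$ is strictly increasing and has at most one zero. Together with existence this yields uniqueness of the full fixed point.
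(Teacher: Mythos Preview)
Your reduction strategy---fix $y_0$, solve the cyclic $J$-dimensional subsystem uniquely, then feed the result into a scalar equation for $y_0$---is genuinely different from the paper's route. The paper does \emph{not} parametrize by $y_0$; instead it takes two full fixed points $x=(x_0,\ldots,x_J)$ and $y=(y_0,\ldots,y_J)$, first rules out $y_j<x_j$ for all $j\ge 1$ by summing the $j$-equations and invoking the class-$0$ equation (this is the only place the $y_0$-equation enters), and then runs the cyclic $c_j/d_j$ comparison of the first ring proposition on the indices $1,\ldots,J$ to force $x=y$.

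Your scheme would be an appealing alternative, but it has a real gap at the step you yourself flag with ``should yield''. The cyclic monotonicity argument you invoke gives \emph{uniqueness} of $(y_j(y_0))_{j=1}^J$ for each fixed $y_0$; it does \emph{not} by itself deliver the quantitative Lipschitz-type bound
\[
\sum_{j=1}^J\bigl(y_j(y_0)-y_j(\tilde y_0)\bigr)\;\le\;\tfrac{J}{2}\,(\tilde y_0-y_0),
\]
and without this you cannot conclude that $T(y_0)=Jy_0+2\sum_j y_j(y_0)$ is non-decreasing, hence cannot conclude that $y_0\mapsto y_0-\phi_0(T(y_0))$ is strictly increasing. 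To close this you would need an extra comparison step---for instance, showing that each argument $a_j(y_0)=y_{j-1}(y_0)+2y_j(y_0)+y_{j+1}(y_0)+2y_0$ is non-decreasing in $y_0$ (summing over $j$ then gives exactly $4\sum_j y_j(y_0)+2Jy_0$ non-decreasing, which is $2T(y_0)$). Proving that monotonicity, however, is itself a two-fixed-point comparison on the subsystem with different parameters $y_0<\tilde y_0$, and is at least as delicate as the paper's direct argument. As written, the proposal stops one genuine lemma short of a proof.
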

\begin{proof}
>From Proposition~\ref{prop:fp_exist} we have that  Equation~\eqref{eq:torusA_yj} has at least one fixed point solution. Let $x=(x_j:j=0,1,\ldots, J)$ and
$y=(y_j:j=0,1,\ldots,J)$ both be fixed points. 

If the relation $y_j<x_j$ holds for all $j=1,\ldots,J$, then 
\[
\phi_j(2y_j+ y_{j-1}+y_{j+1}+2y_0)=y_j<
x_j=\phi_j(2x_j+ x_{j-1}+x_{j+1}+2x_0).
\]
Since the function $\phi_j$ is non-increasing, one gets that
\[
2y_j+ y_{j-1}+y_{j+1}+2y_0>2x_j+ x_{j-1}+x_{j+1}+2x_0.
\]
Summing over all $j=1,\ldots, L$,  we obtain  
\[
4(y_1+\cdots+y_J)+ 2Jy_0>4 (x_1+\cdots+x_J)+ 2Jx_0,
\]
which implies that $x_0<y_0$. However, 
\[
y_0=\phi_0\left(2(y_1+\cdots+y_J) + Jy_0\right) \leq \phi_0\left(2(x_1+\cdots+x_J)+ Jx_0\right) =x_0, 
\]
hence,  we obtain a contradiction. 

We can conclude that there  is an  $m\in \{1,\ldots,  J\}$ such that  $y_m\leq x_m$  and $y_{m+1}\geq
x_{m+1}$. 
To show that $x=y$, one proceeds along similar lines as in the proof of Proposition~\ref{Bell}. 
\end{proof}

\providecommand{\bysame}{\leavevmode\hbox to3em{\hrulefill}\thinspace}
\providecommand{\MR}{\relax\ifhmode\unskip\space\fi MR }
\providecommand{\MRhref}[2]{%
  \href{http://www.ams.org/mathscinet-getitem?mr=#1}{#2}
}
\providecommand{\href}[2]{#2}

\end{document}